\newcommand{\be}{\begin{equation}}
\newcommand{\ee}{\end{equation}}
\newcommand{\bewn}{\begin{equation*}}
\newcommand{\eewn}{\end{equation*}}
\newcommand{\bbmat}{\begin{bmatrix}} 
	\newcommand{\ebmat}{\end{bmatrix}}
\newcommand{\bd}{\begin{displaymath}}
\newcommand{\ed}{\end{displaymath}}
\newcommand{\bea}{\begin{eqnarray}}
\newcommand{\eea}{\end{eqnarray}}
\newcommand{\ba}{\begin{array}}
	\newcommand{\ea}{\end{array}}
\newcommand{\baa}{\begin{array}{ll}}
	\newcommand{\eaa}{\end{array}}
\newcommand{\bc}{\begin{center}}
	\newcommand{\ec}{\end{center}}
\newcommand{\ben}{\begin{enumerate}}
	\newcommand{\een}{\end{enumerate}}
\newcommand{\bi}{\begin{itemize}}
	\newcommand{\ei}{\end{itemize}}
\newcommand{\bt}{\begin{tabular}}
	\newcommand{\et}{\end{tabular}}
\newcommand{\bte}{\begin{table}}
	\newcommand{\ete}{\end{table}}
\newcommand{\norm}[1]{\left\lVert#1\right\rVert}   
\renewcommand\paragraph{\@startsection{paragraph}{4}{\z@}%
	{-2.5ex\@plus -1ex \@minus -.25ex}%
	{1.25ex \@plus .25ex}%
	{\normalfont\normalsize\bfseries}}
\newtheorem{theorem}{Theorem}
\newtheorem{lemma}[theorem]{\textbf{Lemma}}
\newtheorem{assumption}{\textbf{Assumption}}
\newtheorem{problem}{\textbf{Problem}}
\newcommand{\bR}{\mathbb{R}}
\newcommand{\calA}{\mathcal{A}}
\newcommand{\calC}{\mathcal{C}}
\newcommand{\calD}{\mathcal{D}}
\newcommand{\calF}{\mathcal{F}}
\newcommand{\calN}{\mathcal{N}}
\newcommand{\calO}{\mathcal{O}}
\newcommand{\calP}{\mathcal{P}}
\newcommand{\calS}{\mathcal{S}}
\newcommand{\calW}{\mathcal{W}}
\newcommand{\abs}[1]{\left |#1\right |}
\title{\LARGE \bf
Herding an Adversarial Attacker to a Safe Area for Defending Safety-Critical Infrastructure  
}
\author{Vishnu S. Chipade and Dimitra Panagou
\thanks{The authors are with the Department of Aerospace Engineering,
	University of Michigan, Ann Arbor, MI, USA;
	{\tt\small (vishnuc,
			dpanagou)@umich.edu}}
			\thanks{This work has been funded by the Center for Unmanned Aircraft Systems (C-UAS), a National Science Foundation Industry/University Cooperative Research Center (I/UCRC) under NSF Award No. 1738714 along with significant contributions from C-UAS industry members.}
}
\begin{document}

\maketitle
\thispagestyle{empty}
\pagestyle{empty}

\begin{abstract}

This paper investigates a problem of defending safety-critical infrastructure from an adversarial aerial attacker in an urban environment. A circular arc formation of defenders is formed around the attacker, and vector-field based guidance laws herd the attacker to a predefined safe area in the presence of rectangular obstacles. The defenders' formation is defined based on a novel vector field that imposes super-elliptic contours around the obstacles, to closely resemble their rectangular shape. A novel finite-time stabilizing controller is proposed to guide the defenders to their desired formation, while avoiding obstacles and inter-agent collisions. The efficacy of the approach is demonstrated via simulation results. 

\end{abstract}

\section{Introduction}

Advancements in and popularity of swarm technology pose increased risk to safety-critical infrastructure such as government facilities, airports, military bases etc. 
Even unintentional intruders such as bird flocks entering airports cause significant monetary loss~\cite{allan2000costs}. Therefore, defending safety-critical infrastructure from attacks is critically important, particularly in crowded urban areas.

Various approaches have been proposed for the herding problem, including biologically-inspired robotic herding \cite{haque2011biologically, evered2014investigation, pierson2015bio}, and game-theoretic intruder herding \cite{nardi2018game}. In \cite{licitra2017single,licitra2018single} the authors discuss herding using a switched systems approach; the herder chases targets sequentially by switching among them so that certain dwell-time conditions are satisfied to guarantee stability of the resulting trajectories. However, no input constraints were considered, while the number of chased targets is limited.  The authors in \cite{deptula2018single} use approximate dynamic programming to obtain approximately optimal control policies for the herder to chase a target agent to a goal location. However, none of these herding problems considered any obstacles in the environment. 

Similarly to our line of work, \cite{varava2017herding} uses a cage of robots to herd sheep to a goal location. An RRT approach is used to find a motion plan for the robots while maintaining the cage. The herding problem discussed in \cite{pierson2018controlling,pierson2015bio} uses a circular arc formation to influence the nonlinear dynamics of the herd based on a potential-field approach. The authors design a point-offset control to guide the herd close to a specified location. However, both papers treat robots as point masses and no inter-agent collision is addressed. In our work, we assume that the agents are of known circular footprints, and we consider the problem of inter-agent collision avoidance. The robotic herding problem discussed in \cite{gade2015herding} uses an $n$-wavefront algorithm to herd a flock of birds away from an airport, where the intruders on the boundary of the flock are influenced based on the locations of the airport and a safe area. The authors also provide stability and performance guarantees in \cite{gade2016robotic}, as well as experimental results in \cite{paranjape2018robotic}.

In this paper, we address the problem of protecting safety-critical infrastructure (called the protected area) from an adversarial agent (called attacker hereafter), in the presence of rectangular obstacles. We utilize the idea of forming a circular arc of a team of defending agents (called defenders hereafter) around the attacker, in order to influence the motion trajectories of the attacker \cite{pierson2018controlling,pierson2015bio}. The number of defenders depends on the obstacle geometry and the parameters of the attacker's repulsion law.
In addition, we consider that the defenders should converge to their desired positions in finite time to ensure the defenders' timely convergence before the attacker reaches the protected area. Finite-time specifications have been considered recently in multi-agent control problems, e.g., finite-time tracking under leader-follower setting~\cite{wen2016distributed}, finite-time containment~\cite{wang2014distributed}. For an overview on finite-time stability, the interested reader is referred to \cite{bhat2000finite}. Here we employ the idea of finite-time stabilizing controller inspired from \cite{garg2018new} to guide the defenders to their desired positions. 
We also propose a novel vector field around the rectangular obstacles to be used as reference for the formation of the defenders. The vector field has no other singular points except at the center of the safe area. As thus, it provides a safe and globally attractive motion plan for the formation of the defenders to herd the attacker to the safe area. 

Compared to the authors' earlier work \cite{panagou2014motion,panagou2017distributed} the main contributions of this paper are: (1) We consider rectangular static obstacles, which is a more realistic model for urban environments compared to circular obstacles, and design repulsive vector fields around them using super-elliptic contours. Under proper blending with attractive vector fields, we show that the resulting vector field is a globally safe motion plan for the formation of the defenders around the attacker. (2) We design a finite-time stabilizing, state-feedback controller to force the defenders to form and maintain their desired formation while avoiding inter-agent collisions.

The rest of the paper is structured as follows: Section \ref{sec:math_model} describes the mathematical modeling and problem statement. The assumptions on the attacker's control strategy are discussed in Section \ref{sec:attack_control}. The technical developments on the herding strategy and the vector-field based motion planning for the defenders are provided in Section \ref{sec:herding_strategy} and \ref{sec:vector_field}, respectively. Section \ref{sec:tracking_cotntrol} presents the finite-time tracking controller that guides the defenders to their desired formation. Convergence and safety are formally proved in Section \ref{sec:convergence}, while simulations are provided in Section \ref{sec:simulations}. The conclusions and our thoughts on future work are discussed in Section \ref{sec:conclusions}.

\section{Modeling and Problem Statement}\label{sec:math_model}
\textit{Notation}: Vectors are denoted by bold letters ($\textbf{r}$). Script letters denote sets ($\calP$). $R_{b_2}^{b_1}(t)$ and $E_{ok}^{b_1}(t)$ are the Euclidean distance between object $b_2$ and $b_1$, and the Super-elliptic distance between $b_1$ and $\calO_k$, respectively, at time $t$. The argument $t$ would be omitted whenever clear from the context.
$\sigma_{b_2}^{b_1}(\delta)$ is a blending function~\cite{panagou2014motion}, characterized by a triplet $\delta_{\sigma}=(\delta^m,\bar\delta, \delta^u)$ and defined in Eq.~\ref{eq:blendFun}, corresponding to the field around the object $b_2$ for the object $b_1$ which is at a distance $\delta$ from $b_2$. 
\be \label{eq:blendFun}
\sigma_{b_2}^{b_1}(\delta) =  
\begin{cases}
	1, & \delta^m \le \delta \le \bar \delta;	\\[3pt]
	A_{b_2}^{b_1}\delta^3+B_{b_2}^{b_1} \delta^2+ C_{b_2}^{b_1} \delta+D_{b_2}^{b_1}, &   \bar \delta \le \delta \le \delta^u;	\\
	0, &	\delta^u \le \delta;
\end{cases}
\ee
The coefficients $A_{b_2}^{b_1},B_{b_2}^{b_1},C_{b_2}^{b_1},D_{b_2}^{b_1}$ are chosen as: $A_{b_2}^{b_1}=\frac{2}{(\delta^u-\bar \delta)^3}$, $B_{b_2}^{b_1}=\frac{-3(\delta^u+\bar \delta)}{(\delta^u-\bar \delta)^3}$,  $C_{b_2}^{b_1}=\frac{6\delta^u\bar \delta}{(\delta^u-\bar \delta)^3}$, $D_{b_2}^{b_1}=\frac{(\delta^u)^2(\delta^u-3\bar \delta)}{(\delta^u-\bar \delta)^3}$, so that \eqref{eq:blendFun} is a $\calC^1$ function. The argument $\delta$ is either the Euclidean distance or the Super-elliptic distance, depending on the objects under consideration, and would be omitted whenever clear from the context. 

 We consider an environment $\calW \subseteq \mathbb{R}^2$ with $N_o$ rectangular obstacles, a protected area $\calP \subseteq \calW$ defined as $\calP=\{\textbf{r} \in \bR^2 \;| \; \norm{\textbf{r}-\textbf{r}_p}\le \rho_p\}$, and a safe area $\calS \subseteq \calW$ defined as $\calS=\{\textbf{r} \in \bR^2 \; | \; \norm{\textbf{r}-\textbf{r}_s}\le \rho_s\}$, where $(\textbf r_p, \rho_p)$ and $(\textbf r_s, \rho_s)$ are the centers and radii of the corresponding areas, respectively. An attacker $\calA$ and $N_d$ defenders $\calD_j$, $j \in I_d= \{1,2,...,N_d\}$, are operating in $\calW$. $\calA$ and $\calD_j$ are modeled as discs of radii $\rho_a$ and $\rho_d\le \rho_a$, respectively, under single integrator dynamics: 
\be \label{eq:attackDyn1}
\dot{\textbf{r}}_{a}
=\textbf{v}_{a},
\ee
\vspace{-7mm}
\be\label{eq:defendDyn1}
\dot{\textbf{r}}_{dj}
=\textbf{v}_{dj},
\ee
\vspace{-7mm}
\be\label{eq:constraints}
\baa
\norm{\textbf{v}_{a}} \le v_{max_{a}}, \;
\norm{\textbf{v}_{dj}} \le v_{max_{dj}}, 
\eaa
\ee
where $v_{max_{a}} < v_{max_{dj}}$, $\textbf{r}_{a}=[x_{a}\; y_{a}]^T$, $\textbf{r}_{dj}=[x_{dj}\; y_{dj}]^T$ are the position vectors of $\calA$ and $\calD_j$, respectively, with respect to (w.r.t.) a global inertial frame $\calF_g (\hat {\textbf{i}}, \hat {\textbf{j}})$; $\textbf{v}_{a}=[v_{x_{a}}\; v_{y_{a}}]^T$, $\textbf{v}_{dj}=[v_{x_{dj}}\; v_{y_{dj}}]^T$ are their control velocity vectors, respectively, whose norms are bounded by $v_{max_{a}}$ and $v_{max_{dj}}$. We also assume the following:
\begin{assumption} Every defender $\calD_j$ can sense the position $\textbf r_a$ of the attacker $\cal A$ once $\calA$ lies inside a circular sensing-zone $SZ=\{\mathbf r \in \mathbb{R}^2 |\; \norm{\textbf{r}-\textbf{r}_p} \le \rho_d^s\}$ around $\calP$. 
\end{assumption}

We consider static obstacles $\calO_k$ of rectangular shape, with their edges aligned with the axes of $\calF_g$,   
defined as:
\be
\calO_k= \{\mathbf r \in \mathbb{R}^2 | \abs{x-x_{ok}} \le \frac{w_{ok}}{2}, \abs{y-y_{ok}} \le \frac{h_{ok}}{2}\} ,
\ee
where $\mathbf r_{ok}=[x_{ok} \; y_{ok}]^T$ is the center, $w_{ok}$ and $h_{ok}$ are the lengths of $\calO_k$ along $\hat{\textbf{i}}$ and $\hat{\textbf{j}}$,  $ \forall k \in I_o = \{1,2,...,N_o\}$.

In order to defend $\calP$, $\calD_j$ should safely herd $\calA$ to $\calS$. To this end, we consider the following problem: 
\begin{problem}
	Find desired positions $\textbf{r}_{gj}$ and velocities $\dot{{\textbf{r}}}_{gj}$ for the formation of $\calD_j, \forall j \in I_d$, to guide $\calA$ into $\calS$ and keep it there ever after, and control actions $\mathbf v_{dj}$, $\forall j \in I_d$, such that $\calD_j$ safely converge to $\textbf{r}_{gj}$ in finite time. 
\end{problem}
In other words, $\forall j \in I_d$ find $\textbf{v}_{dj}$ such that 
	1) $\exists T_d^c, T_s> 0: \; R_{gj}^{dj}(t)=0, \; \forall t \ge T_d^c $ and $\norm{\textbf{r}_a(t)-\textbf{r}_s}\le \rho_s$, $\forall t \ge T_s$; 2) $\forall t\ge 0$ and $\forall k \in I_o$: $E_{ok}^{dj}(t)> \xi_{ok}^{dj,m}$, and $E_{ok}^{a}(t)> \xi_{ok}^{m}$, where $\xi_{ok}^{dj,m}$, $\xi_{ok}^{m}$ are the minimum allowed super-elliptic distances, respectively; 3) $\forall t\ge 0$:  $R_{dl}^{dj}(t)>R_{d}^{d,m}$, $\forall l \neq j$ and $R_{dj}^a\ge R_{d}^{a,m}$, where $R_{d}^{d,m}$, $R_{d}^{a,m}$ are the corresponding minimum allowed distances.

\section{Attacker's Control Strategy}\label{sec:attack_control}
The attacker $\calA$ aims to reach $\calP$ while avoiding $\calD_j$. 
The control action of $\calA$ is modeled using a vector-field approach~\cite{panagou2017distributed}, as follows: $\calA$ assumes that $\calD_j$ and any static obstacles that it senses within its sensing radius $\rho_a^s$ are of circular shape. Therefore, it moves along the direction prescribed by the vector field defined as:
\be \label{eq:attackVectField1}
\baa
\textbf{F}^a=&\displaystyle \prod_{k\in\calN_{ao}} ( 1- \sigma_{ok}^{a}(R_{ok}^{a})) \prod_{j\in\calN_{ad}} ( 1- \sigma_{dj}^{a}(R_{dj}^{a})) \textbf{F}_{p}^{a} \\
& +\displaystyle \sum_{k\in\calN_{ao}} \sigma_{ok}^{a}(R_{ok}^{a}) \textbf{F}_{ok}^a + \sum_{j\in\calN_{ad}} \sigma_{dj}^{a}(R_{dj}^{a}) \textbf{F}_{dj}^a, 
\eaa
\ee
where $\calN_{ad} = \{j \in I_d \; | \;  \norm{\textbf{r}_a-\textbf{r}_{dj}} \le \rho_a^s\}$ and $\calN_{ao} = \{k \in I_o \; | \;  \norm{\textbf{r}_a-\textbf{r}_{ok}} \le \rho_a^s\}$.
The vector field \eqref{eq:attackVectField1} comprises attractive term $\textbf{F}_{p}^{a}=\frac{\textbf{r}_p-\textbf{r}_a}{\norm{\textbf{r}_p-\textbf{r}_a}}$ and repulsive terms $\textbf{F}_{ok}^a=-\frac{\textbf{r}_{ok}-\textbf{r}_a}{\norm{\textbf{r}_{ok}-\textbf{r}_a}},  \; \textbf{F}_{dj}^a=-\frac{\textbf{r}_{dj}-\textbf{r}_a}{\norm{\textbf{r}_{dj}-\textbf{r}_a}},$.
$\forall k \in \calN_{ao}$ and $\forall j \in \calN_{ad}$, respectively. The blending functions $\sigma_{ok}^a(R_{ok}^{a})$, $\sigma_{dj}^a(R_{dj}^{a})$ are defined as in Eq.~\ref{eq:blendFun} with ($R_{ok}^{a,m}$, $\bar R_{ok}^{a}$, $R_{ok}^{a,u}$ ) and ($R_{d}^{a,m}$, $\bar R_{d}^{a}$, $R_{d}^{a,u}$ ) as the corresponding $\delta_{\sigma}$ triplets.
\begin{assumption}\label{assum:attack_nzero_speed}
The attacker $\calA$ has a non-zero speed when it is close to an obstacle $\calO_k$ or a defender $\calD_j$ i.e., $\norm{\textbf{v}_{a}} > 0$ when \{$\exists k \in I_o: \sigma_{ok}^a \neq 0$\}$ \vee $\{$\exists j \in I_d: \sigma_{dj}^a \neq 0$\}. 
\end{assumption}
\begin{lemma}
	In the absence of defenders $\calD_j$ in $\calW$ and if the obstacles $\calO_k$ satisfy: $\norm{\textbf{r}_{ok}-\textbf{r}_{ol}}\ge R_{ok}^{a,u}+R_{ol}^{a,u}$ for all $k, l \in I_o$ and $k\neq l$, then the vector field $\textbf{F}^a:\mathbb{R}^2\rightarrow \mathbb{R}^2$ as defined in Eq.~\eqref{eq:attackVectField1} is a safe and almost globally attractive field in $\calW_s^a = \calW 	\setminus \displaystyle \bigcup_{k \in I_o}\{\textbf{r}\in\bR^2|\norm{\textbf r-\textbf r_{ok}}<R_{ok}^{a,m}\}$ to the center $\textbf{r}_p$ of $\calP$, except for a set initial conditions of measure zero. 
\end{lemma}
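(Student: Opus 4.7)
The plan is to establish safety and almost-global attractivity separately, exploiting the non-overlap assumption on the obstacles' outer influence radii $R_{ok}^{a,u}$ so that at any point $\textbf{r}_a$ at most one blending function $\sigma_{ok}^a$ is nonzero. Under this reduction, $\textbf{F}^a$ simplifies in each region either to the pure goal-attractive field $\textbf{F}_p^a$ (when no obstacle is close), to a convex combination $(1-\sigma_{ok}^a)\textbf{F}_p^a + \sigma_{ok}^a \textbf{F}_{ok}^a$ (in an intermediate annulus around one obstacle), or to the purely repulsive $\textbf{F}_{ok}^a$ near an obstacle. This local decomposition drives the rest of the argument.

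For safety, I would look at the innermost annulus $R_{ok}^{a,m} \le R_{ok}^a \le \bar R_{ok}^a$, where $\sigma_{ok}^a \equiv 1$ and hence $\textbf{F}^a = \textbf{F}_{ok}^a$, which is the outward unit normal along the line from $\textbf{r}_{ok}$ to $\textbf{r}_a$. Differentiating $W_k(\textbf{r}_a) = \tfrac{1}{2}\|\textbf{r}_a - \textbf{r}_{ok}\|^2$ along trajectories $\dot{\textbf{r}}_a \propto \textbf{F}^a$ and using Assumption \ref{assum:attack_nzero_speed} (so $\|\textbf{v}_a\| > 0$ whenever $\sigma_{ok}^a \neq 0$) gives $\dot W_k > 0$ on $\{R_{ok}^a = R_{ok}^{a,m}\}$, so the boundary of the forbidden disc is repelling and trajectories starting in $\calW_s^a$ remain there for all $t \ge 0$.

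For almost-global attractivity, I would take $V(\textbf{r}_a) = \|\textbf{r}_a - \textbf{r}_p\|$ as a Lyapunov candidate. In the obstacle-free region $\textbf{F}^a = \textbf{F}_p^a$, so $\dot V = -\|\textbf{F}_p^a\|^2 < 0$ everywhere except at $\textbf{r}_p$. In the mixed region, writing $\textbf{F}^a = (1-\sigma_{ok}^a)\textbf{F}_p^a + \sigma_{ok}^a \textbf{F}_{ok}^a$ and projecting onto $-\textbf{F}_p^a$, $\dot V$ remains negative except at the set where $\textbf{F}^a = \mathbf 0$. Such equilibria require $\textbf{F}_p^a$ and $\textbf{F}_{ok}^a$ to be anti-parallel with matched magnitudes, which under the non-overlap hypothesis can occur only at isolated points along the ray from $\textbf{r}_p$ through $\textbf{r}_{ok}$, on the side of each obstacle opposite to $\textbf{r}_p$; so the set of equilibria besides $\textbf{r}_p$ is at most a finite collection of points in $\calW_s^a$. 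Linearizing $\textbf{F}^a$ at any such undesired equilibrium (the derivative in the transverse-to-the-ray direction is destabilizing while the along-ray direction has the attractive and repulsive derivatives combining), I would argue each is a saddle whose stable manifold is one-dimensional. By standard results on stable manifolds of hyperbolic equilibria (along the lines of \cite{panagou2014motion,panagou2017distributed}), the union of these stable manifolds has Lebesgue measure zero in $\calW_s^a$, and every other trajectory asymptotically converges to $\textbf{r}_p$ by LaSalle.

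The main obstacle will be the last step: rigorously characterizing the undesired critical points and showing each is hyperbolic with a lower-dimensional stable manifold. Concretely, one must verify that the Jacobian of $\textbf{F}^a$ at the balance point has a positive eigenvalue in the transverse direction, which hinges on the repulsive term $\textbf{F}_{ok}^a$ being a radial field whose angular derivative contributes destabilizingly while the cubic derivatives of $\sigma_{ok}^a$ do not cancel this effect. Once this is in place, the measure-zero exceptional set follows and the almost-global attractivity claim is obtained.
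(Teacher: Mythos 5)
First, be aware that the paper does not actually prove this lemma: its entire ``proof'' is a pointer to Theorem~4 of \cite{panagou2017distributed}. What you have written is a reconstruction, in spirit, of that reference's argument (region-by-region decomposition using the non-overlap of the obstacles' influence zones so that at most one $\sigma_{ok}^a$ is active; a repelling boundary argument for safety; isolated saddle-type spurious equilibria with measure-zero stable manifolds for almost-global attractivity). Your safety half is sound, and your characterization of the spurious equilibria is correct: since $\textbf{F}_p^a$ and $\textbf{F}_{ok}^a$ are unit vectors, $\textbf{F}^a=\textbf{0}$ in the blending annulus forces $\sigma_{ok}^a=\tfrac12$ and anti-parallelism, which pins the equilibrium to a single point on the ray through $\textbf{r}_p$ and $\textbf{r}_{ok}$ on the far side of the obstacle.

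The genuine gap is in your attractivity step. With $V(\textbf{r}_a)=\norm{\textbf{r}_a-\textbf{r}_p}$ one gets, in the blending annulus, $\dot V \propto -\left((1-\sigma_{ok}^a)+\sigma_{ok}^a\cos\theta\right)$, where $\theta$ is the angle between $\textbf{F}_p^a$ and $\textbf{F}_{ok}^a$. This is \emph{not} nonpositive away from the equilibria: whenever $\sigma_{ok}^a>\tfrac12$ and $\cos\theta<-(1-\sigma_{ok}^a)/\sigma_{ok}^a$ one has $\dot V>0$ --- geometrically, an attacker close behind the obstacle is pushed radially away from $\textbf{r}_{ok}$ and therefore also away from $\textbf{r}_p$. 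So LaSalle with this $V$ does not close the argument. The fix, which is what the cited result actually does, is to argue in two stages: show that every integral curve entering an obstacle's influence region exits it in finite time (except those on the one-dimensional stable manifold of the saddle), and show that $V$ strictly decreases outside all influence regions; convergence follows by composing these facts rather than from a single Lyapunov function. Your second, self-acknowledged gap --- hyperbolicity of the spurious equilibrium --- does go through as you suspect: $\sigma_{ok}^a$ depends only on $R_{ok}^a$, so a transverse perturbation changes it only to second order, and the transverse eigenvalue evaluates to $\tfrac12\left(1/\norm{\textbf{r}_a-\textbf{r}_{ok}}-1/\norm{\textbf{r}_a-\textbf{r}_p}\right)>0$ because the equilibrium is closer to the obstacle center than to $\textbf{r}_p$; but this computation must actually be supplied for the measure-zero claim to stand.
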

\begin{proof}Refer to Theorem 4 in \cite{panagou2017distributed}.
\end{proof}
The control action of the attacker $\calA$ is given as:
\be\label{eq:attack_control}
\dot {\textbf{r}}_a=
\begin{cases}
	 \frac{\textbf{F}^a}{\norm{\textbf{F}^a}}\norm {\textbf v_a}, &  \textbf{F}^a\neq \textbf 0;\\
	  \bbmat \cos(\theta^-+\epsilon_a )\\ \sin(\theta^-+\epsilon_a ) \ebmat\norm {\textbf v_a}, & \textbf{F}^a = \textbf 0.
\end{cases}
\ee
with $\theta^- = \theta(t^-)$, where $t$ is the time when $ \textbf{F}^a = \textbf 0$ and $\epsilon_a <<1$. The attacker $\calA$ moves slightly away from the direction imposed by $\mathbf F^a$ when in the neighborhood of $\textbf{F}^a = \textbf 0$, to get out of the deadlock situation.


\section{Herding Strategy}\label{sec:herding_strategy}

We now develop the strategy to herd $\calA$ to the safe area $\calS$. We build on the work by Pierson et al.~\cite{pierson2018controlling,pierson2015bio}, in which $\calD_j$ are positioned along a circular arc formation around $\calA$ to direct it towards $\calS$ as shown in Fig.~\ref{fig:circle_form}. By doing so the defenders generate a sufficient repulsive force on $\calA$ such that $\calA$ does not aim to escape even through the gaps between the defenders.

\begin{figure}[htbp]
	\centering
	\includegraphics[width=.85\linewidth]{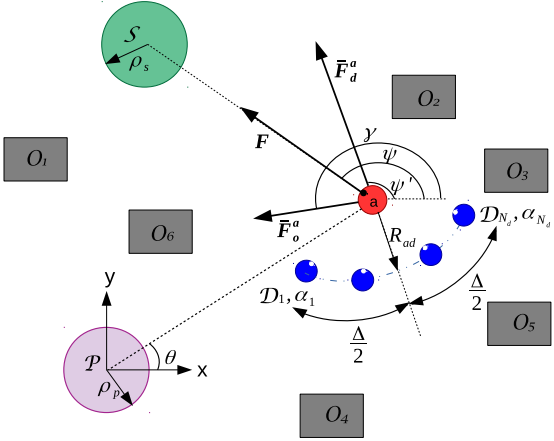}
	\caption{Circular formation around the attacker}
	\label{fig:circle_form}
\end{figure}
\vspace{-1mm}
More specifically, we place the $N_d$ defenders $\calD_j$ along the circular arc of radius $R_{ad}$ in symmetric formation, so that the angle $\alpha_j$ is chosen as: 
\be
\alpha_j=\psi'+\pi+\Delta_j, \; \text{where} \; \Delta_j=\Delta \frac{(2j-N_d-1)}{(2N_d-2)},
\ee
for some angle $\psi'$ along which the resultant repulsive vector field from the defenders is pointed, where $\Delta$ is the total spread angle of $\calD_j$'s around $\calA$. For a given number $N_d$, a minimum separation $R_{d}^{a,m}$ allowed between $\calA$ and $\calD_j$, and a minimum separation $R_{d}^{d,m} >\rho_d$ allowed among $\calD_j$, the minimum spread required is:
 \bewn \Delta_{min} = (N_d-1) \cos^{-1}\left (1-\frac{ (R_{d}^{d,m})^2}{2(R_{d}^{a,m})^2} \right). \eewn  We choose $R_{ad}=\frac{R_{d}^{a,m}+\bar R_{d}^{a}}{2} \le \bar R_{d}^{a}$.
With this choice of $R_{ad}$, we have $\sigma_{dj}^a = 1$ for all $j \in I_d$ and hence the vector field $\textbf{F}^a$ becomes:
\be \label{eq:attackVectField2}
\textbf{F}^a=\displaystyle \sum_{k\in\calN_{ao}} \sigma_{ok}^{a} \textbf{F}_{ok}^a + \sum_{j=1}^{N_d} \bbmat C(\alpha_j) \\ S(\alpha_j)\ebmat,
\ee
where $C(.)=\cos(.)$ and $S(.)=\sin(.)$. 

 Let the repulsive fields $\textbf{F}_{ok}^a$ due to all $\calO_k$ in $\calA$'s sensing radius add up to give: $\bar{\textbf{F}}_o^a = \bar F_o^a \bbmat C(\gamma) \\ S(\gamma)\ebmat$. 
 $\textbf{F}^a$ then is:
 \be \label{eq:attackVectField3}
 \textbf{F}^a= \bar{F}_{o}^a \bbmat C(\gamma) \\ S(\gamma)\ebmat + \frac{S(\frac{N_d\Delta}{2N_d-2})}{S(\frac{\Delta}{2N_d-2})} \bbmat C(\psi') \\
 S(\psi') \ebmat.
 \ee
 Let $\bar \Delta_S = \frac{S(\frac{\Delta}{2N_d-2})}{S(\frac{N_d \Delta}{2N_d-2})}$, then $\textbf{F}^a$ can be rewritten as:
 \be \label{eq:attackVectField4}
 \textbf{F}^a= \bar{F}_{o}^a \bbmat C(\gamma) \\ S(\gamma)\ebmat +  \bar \Delta_S  \bbmat C(\psi') \\
 S(\psi') \ebmat.
 \ee
 To guide $\calA$ along a direction $\psi$ of our choice, we need:
 \be \label{eq:attack_vector_field5}
 \frac{\textbf{F}^a}{\norm{\textbf{F}^a}} = \bbmat C(\psi) \\ S(\psi) \ebmat.
 \ee
 This is achieved by choosing $\psi'$ such that:   
 \be \label{eq:desired_psi_dash}
 \bar \Delta_S \left ( S(\psi')-T(\psi)C(\psi') \right )=\bar F_o^a \left (T(\psi)C(\gamma) - S(\gamma) \right ),
 \ee
 \be \label{eq:desired_psi_dash_dot}
 \baa
 \dot \psi'=&\frac{\dot {\bar F}_o^a (T(\psi)C(\gamma)-S (\gamma))-\bar F_o^a \dot {\gamma}(T(\psi)S(\gamma)+C(\gamma))}{\bar \Delta_S(C(\psi')+T(\psi)S(\psi'))}\\
 &+\frac{\sec^2(\psi) \dot {\psi}\left(\bar F_o^a C(\gamma)+\bar\Delta_S(C(\psi')\right)}{\bar \Delta_S(C(\psi')+T(\psi)S(\psi'))}.
 \eaa
 \ee
We choose $\Delta$ and $N_d$ such that $\bar \Delta_S > \bar F_o^a$ and hence $\dot \psi'$ is always finite and bounded by $\dot{\psi}_{max}'$.

 The desired (goal) position and velocity of $\calD_j$ in the circular arc formation are then obtained as:
\be \label{eq:desiredDefPos}
\textbf{r}_{gj}=\textbf{r}_a+R_{ad}\bbmat C(\psi'+\pi+\Delta_j)\\ S(\psi'+\pi+\Delta_j)\ebmat,
\ee 
\be \label{eq:desiredDefVel}
\dot {\textbf{r}}_{gj}=\dot {\textbf{r}}_a +  R_{ad}  \dot \psi' \bbmat -S(\psi'+\pi+\Delta_j)\\ C(\psi'+\pi+\Delta_j)\ebmat.
\ee

\section{Guiding the formation in an environment with obstacles}\label{sec:vector_field}

In this section, we design the reference heading $\psi$ for the formation of the attacker $\calA$ and the defenders $\calD_j$ such that the obstacles $\calO_k$ in the environment are avoided. The formation is centered at the location of the attacker $\mathbf r_a$ and has radius $\rho_f=R_{ad}+\rho_d$. To avoid $\calO_k$ we define a vector field that closely follows the rectangular shape. We use the idea of superquadric isopotential contours \cite{volpe1990manipulator}, which are defined in Eq.~\eqref{eq:superEllDist1} for different values of super-elliptic distance $E_{ok}$: 
\be\label{eq:superEllDist1}
E_{ok} = \abs{\frac{x-x_{ok}}{a_{ok}}}^{2n_{ok}} + \abs{\frac{y-y_{ok}}{b_{ok}}}^{2n_{ok}} -1,
\ee
where $a_{ok}=\frac{ w_{ok}}{2}(2)^{\frac{1}{2n_{ok}}}$ and $b_{ok}=\frac{ h_{ok}}{2}(2)^{\frac{1}{2n_{ok}}}$.
The shape of the contour corresponding to a fixed value of $E_{ok}$ becomes more rectangular as $n_{ok}$ tends to $\infty$, i.e. by choosing large $n_{ok}$ we can ensure, the contours closely match the shape of the rectangular building, see Fig.\ref{fig:vectFieldDef}.

To account for the safe distance ($R_{safe}$) from the obstacle boundary and the maximum size of the formation ($\rho_f$), the rectangular obstacles are inflated as:
\be
\calO_k^i= \left \{\mathbf r \in \mathbb{R}^2 | \abs{x-x_{ok}} \le \frac{ \bar w_{ok}}{2}, \; \abs{y-y_{ok}} \le \frac{ \bar h_{ok}}{2}\right \} ,
\ee
where $\bar w_{ok}=w_{ok}+2(\rho_f+R_{safe})$ and $\bar h_{ok}=h_{ok}+2(\rho_f+R_{safe})$ for all $ k \in I_o$.
We choose constant values for $n_{ok}$: 
\be
\baa \label{eq:superQuadCont2}

n_{ok}&=\frac{1}{1-e^{-\xi_{ok}^m}} \;,  \; \xi_{ok}^m=\frac{1}{2}\left(\abs{\frac{\bar w_{ok}}{w_{ok}}}^{2n_{ok}} + \abs{\frac{\bar h_{ok}}{h_{ok}}}^{2n_{ok}} \right)-1 .
\eaa
\ee
The inflated rectangular obstacles can be approximated by a bounding super-ellipse, given as:
\be
\calO_k^m= \left \{ \textbf{r} \in \mathbb{R}^2 | E_{ok}\le \xi_{ok}^m \right \}.
\ee
 The slope of the tangent to these contours at a given point $f$ with position vector $\textbf{r}_f\in\mathbb R^2$ is obtained as:
 	\be \label{eq:contour_tangent_slope}
	\tan(\bar\beta_{ok}^{f}) = -\frac{b_{ok}^{2n_{ok}}C(\beta_{ok}^{f})(C(\beta_{ok}^{f}))^{2n_{ok}-2}}{a_{ok}^{2n_{ok}}S(\beta_{ok}^{f})(S(\beta_{ok}^{f}))^{2n_{ok}-2}}
	\ee
where $\beta_{ok}^{f}$ is the angle made by the position vector of $f$ with respect to the center of $\calO_k$ and $\hat {\textbf{i}}$, given as: $\beta_{ok}^{f}=\tan^{-1}\left(\frac{y_f-y_{ok}}{x_f-x_{ok}}\right)$.
\vspace{0mm}
\subsection{Repulsive vector field}
The repulsive vector field around $\calO_k^m$ is defined as: 
\vspace{-1mm}
\be \label{eq:repulsField1}
\textbf{F}_{ok}^{f}= \bbmat \cos(\phi_{ok}^{f,s})\\\sin(\phi_{ok}^{f,s}) \ebmat, 
\ee
\vspace{-1mm}
where
\be \label{eq:fieldAngle}
\phi_{ok}^{f,s}=
\begin{cases}
	\bar\beta_{ok}^{f}-\Delta\bar\beta_{ok}^{s}+\frac{\Delta\beta_{ok}^{f,s}}{\pi}(\Delta\bar\beta_{ok}^{s}-\pi), &\Delta\beta_{ok}^{f,s}< \pi ;\\
	\bar\beta_{ok}^{f}-\Delta\bar\beta_{ok}^{s}(\frac{\Delta\beta_{ok}^{f,s}-\pi}{\pi}), & \text{otherwise};
\end{cases}
\ee
where $\Delta\beta_{ok}^{f,s}\in [0,2\pi]$ is the angle between the vectors $\textbf{r}-\textbf{r}_{ok}$ and $\textbf{r}_{s} -\textbf{r}_{ok}$, defined as: $\Delta\beta_{ok}^{f,s}=\beta_{ok}^{f}-\beta_{ok}^{s}$,
and, similarly, $\Delta\bar\beta_{ok}^{s} \in [0,2\pi]$ is the angle between the tangent to the contour corresponding to the obstacle $\calO_k$ at the center of $\calS$ (i.e., $\bar\beta_{ok}^{s}$) and the vector $\textbf{r}_s-\textbf{r}_{ok}$, shown with pink text in Fig.~\ref{fig:vectFieldDef}, defined as: $\Delta\bar\beta_{ok}^{s}=\bar\beta_{ok}^{s}-\beta_{ok}^{s}$.
\vspace{-2mm}
\begin{figure}[htbp]
	\centering
	\includegraphics[width=.95\linewidth]{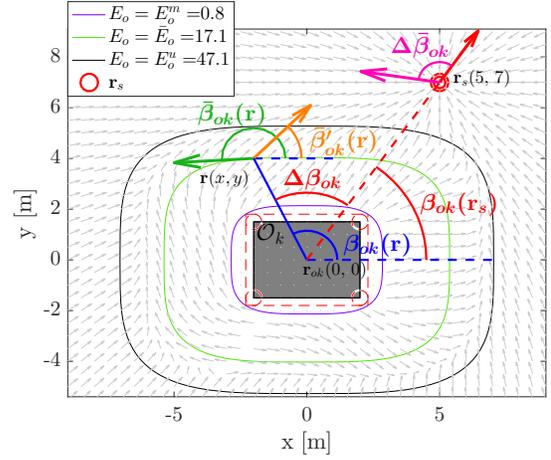}
	\caption{Vector field definition around a rectangle }
	\label{fig:vectFieldDef}
\end{figure}

\vspace{0mm}
\subsection{Attractive vector field}
A radially converging vector field is considered around the centre of $\calS$, defined as:
\be \label{eq:attractField1}
\baa
\textbf{F}_{s}^{f}=\frac{\textbf{r}_s-\textbf{r}_f}{\norm{\textbf{r}_s-\textbf{r}_f}}.
\eaa
\ee
We set $\textbf{F}_{s}^{f}=\mathbf 0$ at $\textbf{r}_{f}=\bm{0}$, so $\textbf{F}_{s}^{f}$ is defined everywhere. 
\subsection{Combining attractive and repulsive vector fields}
In order to combine the attractive and repulsive vector fields smoothly near $\calO_k$, a blending function $\sigma_{ok}^f (E_{ok}^{f})$ is defined as in Eq.~\ref{eq:blendFun} with ($\xi_{ok}^{m}$, $\bar \xi_{ok}$, $\xi_{ok}^{u}$) as the corresponding $\delta_{\sigma}$ triplet.
This essentially means that the repulsive effect of $\calO_k$ is confined within $(\calO_k^m)^c\cap(\calO_k^u)$, where $(\calO_k^m)^c$ is the compliment of $\calO_k^m$ and $\calO_k^u$ is given as:
\be
\calO_k^u= \left \{ \textbf{r} \in \mathbb{R}^2 \; | \; E_{ok} \le \xi_{ok}^u \right \}.
\ee

The attractive and repulsive terms are combined as:
\be \label{eq:combVectField1}
\textbf{F}^{f}=\displaystyle \prod_{k=1}^{N_o} ( 1- \sigma_{ok}^f (E_{ok}^f)) \textbf{F}_s^{f} + \displaystyle \sum_{k=1}^{N_o} \sigma_{ok}^f  (E_{ok}^f) \textbf{F}_{ok}^{f}.
\ee
 The vector field $\textbf{F}^f$ around $\calO_k$ with $(w_{ok},h_{ok})=(4,3)$, $\textbf{r}_{ok}=[0, 0]^T$; and $\calS$ with $\textbf{r}_s=[3,7]^T$, is shown in Fig.~\ref{fig:vectFieldDef}.
 \vspace{-2mm}
\begin{assumption}\label{assum:obstacle_field}
	The $N_o$ rectangular obstacles $\calO_k$, $k \in I_o$, are spaced such that $\displaystyle \bigcap_{k \in I_o}\calO_k^u = \emptyset$ and $\calS \notin \displaystyle \bigcup_{k \in I_o}\calO_k^u$, i.e., at any location, the effect of at most one obstacle is active, and there is no effect of any obstacle in the safe area $\calS$. 
\end{assumption}
\begin{theorem}\label{th:safe-convergent_motion_plan}
If Assumption \ref{assum:obstacle_field} holds, then the vector field $\textbf{F}^{f}:\mathbb{R}^2 \rightarrow \mathbb{R}^2$ as defined in Eq.~\eqref{eq:combVectField1} is a safe, and globally attractive motion plan to the safe area $\calS$ centered at $\textbf{r}_s$ in $\calW_s = \calW 	\setminus \displaystyle \bigcup_{k \in I_o}\calO_k^m$.
\end{theorem}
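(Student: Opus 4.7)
The plan is to verify safety and global attractiveness separately, using Assumption~\ref{assum:obstacle_field} to reduce the analysis to at most one obstacle at a time. Partition $\calW_s$ into the free region $\calW_f = \calW_s \setminus \bigcup_{k \in I_o} \calO_k^u$, on which $\textbf{F}^f = \textbf{F}_s^f$, and the pairwise-disjoint influence shells $(\calO_k^m)^c \cap \calO_k^u$, on which Assumption~\ref{assum:obstacle_field} forces $\textbf{F}^f = (1-\sigma_{ok}^f)\textbf{F}_s^f + \sigma_{ok}^f \textbf{F}_{ok}^f$ with only the $k$-th term active. By the same assumption $\calS \subset \calW_f$.

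For safety I would show that $\partial\calO_k^m = \{E_{ok} = \xi_{ok}^m\}$ is a positively invariant barrier of the flow $\dot{\textbf{r}} = \textbf{F}^f(\textbf{r})$. Since $\sigma_{ok}^f(\xi_{ok}^m) = 1$ by Eq.~\eqref{eq:blendFun}, the combined field collapses to $\textbf{F}^f = \textbf{F}_{ok}^f$ on this level set, a unit vector with angle $\phi_{ok}^{f,s}$ from Eq.~\eqref{eq:fieldAngle}. The two branches of \eqref{eq:fieldAngle} agree continuously at $\Delta\beta_{ok}^{f,s}\in\{0,\pi,2\pi\}$ and interpolate linearly between the values $\bar\beta_{ok}^f-\Delta\bar\beta_{ok}^s$ and $\bar\beta_{ok}^f-\pi$, where $\bar\beta_{ok}^f$ is the tangent angle from Eq.~\eqref{eq:contour_tangent_slope}. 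A direct computation of $\textbf{F}_{ok}^f\cdot\nabla E_{ok}(\textbf{r}_f)$ on the contour then yields a non-negative outward-normal projection, so the flow is tangent-or-outward everywhere on $\partial\calO_k^m$ and no trajectory initialized in $\calW_s$ can cross into $\calO_k^m$.

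For global attractiveness I take the Lyapunov candidate $V(\textbf{r})=\norm{\textbf{r}-\textbf{r}_s}$. In $\calW_f$, $\textbf{F}^f = \textbf{F}_s^f$ is the unit vector pointing to $\textbf{r}_s$, giving $\dot V = -1$ along the flow and hence finite-time entry into $\calS$ for any trajectory that remains in $\calW_f$. Inside a shell $(\calO_k^m)^c\cap\calO_k^u$, an equilibrium of $\textbf{F}^f$ requires both $\sigma_{ok}^f = \tfrac12$ and $\textbf{F}_s^f = -\textbf{F}_{ok}^f$; the $\calS$-oriented rotation built into $\phi_{ok}^{f,s}$ (which steers the circulation onto the side of $\calO_k$ containing $\textbf{r}_s$) admits this anti-parallelism only on the ``shadow'' ray of $\calO_k$ opposite $\textbf{r}_s$, producing an isolated saddle with a one-dimensional stable manifold. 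Trajectories off this measure-zero manifold slide along level sets of $E_{ok}$ within the shell and re-exit into $\calW_f$ with strictly smaller $V$, so concatenating shell and free-region segments delivers every such trajectory to $\calS$; once in $\calS$, $\textbf{F}^f$ reverts to $\textbf{F}_s^f$ and $V$ keeps decreasing, so $\calS$ is forward invariant.

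The main obstacle is the geometric verification in the second and third paragraphs, namely showing rigorously that the specific interpolation in Eq.~\eqref{eq:fieldAngle} yields (i) a non-negative outward-normal projection on $\partial\calO_k^m$ at every $\textbf{r}_f$ and (ii) a consistent circulation toward the $\calS$-side of the obstacle, ruling out interior equilibria except on the shadow ray. Both reduce to case analysis on the branch selector ($\Delta\beta_{ok}^{f,s}<\pi$ vs.\ $\Delta\beta_{ok}^{f,s}\ge\pi$) and on the sign of $\Delta\bar\beta_{ok}^s$; the algebra is routine but bookkeeping-heavy, after which the Lyapunov argument above assembles the full theorem with essentially no additional work.
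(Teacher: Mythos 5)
There is a genuine gap, and it sits at the heart of your third paragraph. You assert that inside an influence shell the anti-parallelism $\textbf{F}_s^f = -\textbf{F}_{ok}^f$ \emph{does} occur on the ``shadow'' ray of $\calO_k$, producing an isolated saddle and hence only almost-global convergence off a measure-zero stable manifold. This contradicts the statement being proved (global attractiveness) and misreads the construction: $\textbf{F}_{ok}^f$ in Eqs.~\eqref{eq:repulsField1}--\eqref{eq:fieldAngle} is not a radially diverging field (for which your shadow-ray saddle would indeed appear) but a unit circulation field whose angle interpolates between an $\calS$-directed orientation and the contour tangent $\bar\beta_{ok}^f$. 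On the shadow ray ($\Delta\beta_{ok}^{f,s}=\pi$) the field is tangent to the super-elliptic contour while $\textbf{F}_s^f$ points roughly toward the obstacle center, so the two are closer to orthogonal than anti-parallel. The paper's proof makes exactly this the central technical step: it writes $\norm{\textbf{F}^f} = \sqrt{1+2\sigma_{ok}^f(1-\sigma_{ok}^f)(\cos(\partial\bar\beta_{ok})-1)}$, observes that vanishing requires $\sigma_{ok}^f = \tfrac12$ \emph{and} $\partial\bar\beta_{ok}=\pm\pi$, and then argues (by parameterizing the worst-case contour through $\textbf{r}_s$ and sweeping $\beta_{ok}^s$, Fig.~\ref{fig:minMaxDelBeta}) that $\partial\bar\beta_{ok}$ stays bounded away from $\pm\pi$ everywhere, so there is \emph{no} interior equilibrium at all. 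Your proof would need to carry out that angle bound rather than concede a saddle; as written, it proves a strictly weaker claim and attributes to the field a critical point it was designed not to have.

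A secondary weakness: your Lyapunov argument claims trajectories ``re-exit into $\calW_f$ with strictly smaller $V$,'' but inside a shell $\dot V$ is not sign-definite (circulating around an obstacle can increase $\norm{\textbf{r}-\textbf{r}_s}$ transiently), so $V=\norm{\textbf{r}-\textbf{r}_s}$ does not certify progress there. The paper instead cites its earlier result that every integral curve entering $\calO_k^u$ escapes through $\partial\calO_k^u$, and combines this with the absence of singular points and the trivial convergence of $\textbf{F}_s^f$ outside the shells. Your safety argument via the outward-normal projection of $\textbf{F}_{ok}^f$ on $\{E_{ok}=\xi_{ok}^m\}$ (where $\sigma_{ok}^f=1$) is sound in spirit and is essentially what the paper delegates to the cited lemma; note only that the two branches of Eq.~\eqref{eq:fieldAngle} do not agree at $\Delta\beta_{ok}^{f,s}=\pi$ (they differ by $\pi$), so the ``agree continuously'' claim in your second paragraph also needs repair.
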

\begin{proof}
First, we show that $\textbf{F}^{f}$ in $\calW_s$ has no singular point other than $\textbf{r}_f=\textbf{r}_s$, to ensure that no initial conditions in $\calW_s$ yields trajectories that end at the singular point. To verify this, we check $\norm{\textbf{F}^{f}}$ in 3 disjoint subsets $\calW_{s1}$, $\calW_{s2}$ and $\calW_{s3}$ of $\calW_{s}$ under assumption \ref{assum:obstacle_field}. Let $\sigma_{ok}^f=\sigma_{ok}^f(E_{ok}(\textbf{r}_f))$.\\
$\bullet$
 Set $\calW_{s1}=\{\textbf{r}_f\in \calW_s\;|\;\sigma_{ok}^f=0, \; \forall k \in I_o\}$. So, we have, $\textbf{F}^{f}=\textbf{F}_{s}^{f}$ and $\norm{\textbf{F}^{f}}=1$ for all $\textbf{r}_f\neq \textbf{r}_s \in \calW_{s1}$. 
 \\$\bullet$
Set $\calW_{s2}=\displaystyle \bigcup_{k \in I_o} \{\textbf{r}_f\in \calW_s\;|\;\sigma_{ok}^f=1\}$. So, $\exists !$ $k \in I_o$ such that $\textbf{F}^{f}=\textbf{F}_{ok}^{f}$ and $\norm{\textbf{F}^{f}}=1$ for all $\textbf{r}_f \in \calW_{s2}$.
 \\$\bullet$
 Set $\calW_{s3}=\displaystyle \bigcup_{k \in I_o} \{\textbf{r}_f\in \calW_s \;|\; 0<\sigma_{ok}^f<1\}$. For any $\textbf{r}_f\in \calW_{s3}$, $\exists!$ $k \in I_o$ such that $\textbf{F}^{f}$ is given as: $
 \textbf{F}^{f}=(1-\sigma_{ok}^f)\textbf{F}_{s}^{f}+\sigma_{ok}^f\textbf{F}_{ok}^{f},
 $
 with the norm defined as:
 \be
 \norm{\textbf{F}^{f}}=\sqrt{1+2\sigma_{ok}^f\left(1-\sigma_{ok}^f\right)\left (\cos(\partial{\bar\beta}_{ok})-1 \right )},
 \ee
 where $\partial{\bar\beta}_{ok}$ is the angle between the two vectors $\textbf{F}_{s}^{f}$ and $\textbf{F}_{ok}^{f}$.
For $\sigma_{ok}^f \in (0,1)$, we have $0<\sigma_{ok}^f(1-\sigma_{ok}^f)<\frac{1}{4}$. 
Now, if $\cos(\partial{\bar\beta}_{ok})\in (-1,1]$, then $-1<2\sigma_{ok}^f\left(1-\sigma_{ok}^f\right)\left (\cos(\partial{\bar\beta}_{ok})-1 \right )\le0$. This implies $\norm{\textbf{F}^{f}}>0$.
Then the only condition for $\norm{\textbf{F}^{f}}$ to be 0 is to have $\cos(\partial{\bar\beta}_{ok})=-1$, i.e. $\partial{\bar\beta}_{ok}=\pm\; \pi$ and $\sigma_{ok}^f=0.5$.
Now, we will show that $\partial{\bar\beta}_{ok}$ can never become $\pm \;\pi$. To see that, let us consider $\Delta\beta_{ok}^{f,s}\ge \pi$, then $\partial{\bar\beta}_{ok}$ becomes: 
\vspace{0mm}
\be  \label{eq:delBeta1}
\partial{\bar\beta}_{ok} = \tan^{-1}\left(
\frac{y_s-y_f}{x_s-x_f}\right)-\bar\beta_{ok}^{f}+\Delta\bar\beta_{ok}^{s}\left(\frac{\Delta\beta_{ok}^{f,s}-\pi}{\pi}\right).
\ee
For a given $\textbf{r}_s$, the angle $\phi_{ok}^{f,s}$ of the vector field $\textbf{F}_{ok}^{f}$ is only dependent on the angle $\beta_{ok}^{f}$. Also for a given $\beta_{ok}^{f}$, the angle $\tan^{-1}\left(
\frac{y_s-y_f}{x_s-x_f}\right)$ becomes larger with larger $\norm{\textbf{r}_f-\textbf{r}_{ok}}$. So, let us consider a worst case scenario where the contour corresponding to $E_{ok}=\xi_{ok}^u$ just touches the safe location $\textbf{r}_s$. 
 After parameterizing this contour in terms of $\beta_{ok}^{f}$, we get $x_f=x_{ok}+\bar a_{ok} (\cos(p_{ok}))^{\frac{1}{n_{ok}}}$ and $y_f=y_{ok}+\bar b_{ok}(\sin(p_{ok}))^{\frac{1}{n_{ok}}}$
where $p_{ok}=\tan^{-1} \left(\frac{\left(\bar a_{ok}\sin(\beta_{ok}^{f})\right)^{n_{ok}}}{\left(\bar b_{ok}\cos(\beta_{ok}^{f})\right)^{n_{ok}}}\right)$, $\bar a_{ok}=a_{ok}(1+\xi_{ok}^u)^\frac{1}{2n_{ok}}$, and $\bar b_{ok}=b_{ok}(1+\xi_{ok}^u)^\frac{1}{2n_{ok}}$.
Because of the symmetry, we only consider $\calS$ with $\textbf{r}_s$ in the first quadrant w.r.t $\textbf{r}_{ok}$. The function $\partial{\bar\beta}_{ok}$ in Eq.~\eqref{eq:delBeta1} obtained in terms of $\beta_{ok}^{f}$, after replacing $x_f$, $y_f$ and other variables, is continuous in its arguments, but it is not trivial to find the maximum value of $\partial{\bar\beta}_{ok}$. So we find the maximum and minimum $\partial{\bar\beta}_{ok}$ for $\Delta\beta_{ok}^{f,s} \in[ \pi ,2\pi]$ for a given choice of $\beta_{ok}^{s}$. We plot the maxi and min values of $\partial{\bar\beta}_{ok}$ for different choices of  $\beta_{ok}^{s} \in [0,\frac{\pi}{2}]$ in Fig.~\ref{fig:minMaxDelBeta} and it can be observed that the value never becomes $\pi$ or $-\pi$, in fact it is far from being $\pi$ or $-\pi$.
\vspace{0mm}
\begin{figure}[htbp]
	\centering
	\includegraphics[width=.85\linewidth]{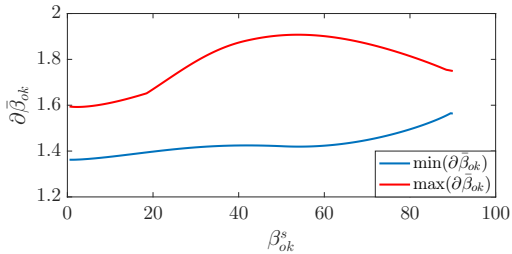}
	\caption{Minimum and Maximum values of $\partial{\bar\beta}_{ok}$}
	\label{fig:minMaxDelBeta}
 \end{figure}
Similar analysis can be performed when $\Delta\beta_{ok}^{f,s} < \pi$ to show that $\partial{\bar\beta}_{ok}$ never becomes $\pi$ or $-\pi$, which implies that $\norm{\textbf{F}^{f}} \neq 0$ in $\calW_{s3}$ and hence there is no singular point in $\calW_s$ other than $\textbf{r}_f=\textbf{r}_s$. Now, similar to the analysis of Lemma 1 in \cite{panagou2017distributed}, it can be shown that the integral curves which enter the set $\calO_k^u$ escape the boundary of $\calO_k^u$. Also, outside $\left\{\displaystyle \bigcup_{k \in I_o}\calO_k^u\right\}$, by definition the vector field is convergent to $\textbf{r}_s$. This proves that the vector field $\textbf{F}^{f}$ is a safe and globally attractive motion plan to the safe area $\calS$.
\end{proof}

\subsection{Safe Area ($\calS$): A Forward Invariant Set}\label{subsec:safe_area_psi}
Once $\calA$ is in the safe area $\calS$, which is chosen far away from the protected area $\calP$, we gradually change the formation heading $\psi$ to a direction which is nearly perpendicular to the direction given by the vector field $\textbf{F}^{f}$, and keep applying the nearly perpendicular direction thereafter. This traps $\calA$ inside $\calS$, while yielding smooth controls for $\calD_j$. Therefore $\psi$ is designed as:
 \be \label{eq:formationHeadAngle2}
\psi =
\begin{cases}
	\tan^{-1}\left( \frac{F_y}{F_x} \right ), &  \norm{\textbf{r}_a-\textbf{r}_s} > \rho_s;\\
	\tan^{-1}\left( \frac{F_y}{F_x}\right)+\int_{T_s}^{t} \frac{(\frac{\pi}{2}-\epsilon_s)}{\Delta T_t} d\tau,& T_s \le t\le T_s+\Delta T_t;\\
		\tan^{-1}\left( \frac{F_y}{F_x}\right)+(\frac{\pi}{2}-\epsilon_s), &  T_s+T_t \le  t;
\end{cases}
\ee
where $T_s$ is the time when $\calA$ enters $\calS$, $\Delta T_t$ is a transition time, and $\epsilon_s$ is a small number. Let $v_{max_d}=\displaystyle \min_{j \in I_d} v_{max_{dj}}$.  $\Delta T_t$ and $\rho_s$ are chosen as:  $\Delta T_t \ge \frac{\pi}{2}\frac{\left(v_{max_d}-v_{max_a}\right)}{R_{ad}}$ and $\rho_s\ge v_{max_a}\Delta T_t+\frac{v_{max_a}R_{ad}}{v_{max_d}-v_{max_a}}$.  
\section{Reference Trajectory Tracking}\label{sec:tracking_cotntrol}
To ensure that the attacker $\calA$ is herded along the $\psi$ direction, the defenders $\calD_j$'s have to track $\textbf{r}_{gj}$ (Eq. \eqref{eq:desiredDefPos} and $\dot{\textbf{r}}_{gj}$  (Eq.~\eqref{eq:desiredDefVel}), while avoiding collisions with $\calO_k$'s and $\calD_l$'s.
To ensure that $\textbf{r}_{gj}$  does not intersect any $\calO_k$ for all $j \in I_d$, the following assumptions are made.
\begin{assumption} \label{assum:R_m_aok}
	 $R_{ok}^{a,m} \ge \sqrt{\bar w_{ok}^2+\bar h_{ok}^2}$, $\forall k \in I_o$.
\end{assumption}
\begin{assumption} \label{assum:R_safe}
	$R_{safe} \ge 2\rho_d$. 
\end{assumption}

\subsection{Repulsive vector field from the obstacles for $\calD_j$}
The repulsive vector field at $\textbf{r}_{dj}=[x_{dj} \; y_{dj}]^T$ around the obstacle $\calO_k$ is defined as: 
\be \label{eq:defendRepulsField1}
\baa
\textbf{F}_{ok}^{dj}= \bbmat \cos(\phi_{ok}^{dj,gj})\\\sin(\phi_{ok}^{dj,gj})\ \ebmat, \\
\eaa
\ee
where $\phi_{ok}^{dj,gj}$ is defined similar to $\phi_{ok}^{f,s}$ as in Eq.~\ref{eq:fieldAngle}.

\subsection{Repulsive vector fields from $\calD_l$ for $\calD_j$}
A radially diverging vector field is considered around each one of the defenders $\calD_l$, defined as:
\be \label{eq:defendRepulsField2}
\baa
\textbf{F}_{dl}^{dj}=-\frac{\textbf{r}_{dl}-\textbf{r}_{dj}}{\norm{\textbf{r}_{dl}-\textbf{r}_{dj}}},  \quad \forall l \neq j \in I_d.
\eaa
\ee

\subsection{Attractive vector field to the desired position of $\calD_j$}
A radially converging vector field is considered around the desired position of the defender $\textbf{r}_{gj}$, defined as:
\be \label{eq:defendAttractField1}
\baa
\textbf{F}_{gj}^{dj}=\frac{\textbf{r}_{gj}-\textbf{r}_{dj}}{\norm{\textbf{r}_{gj}-\textbf{r}_{dj}}}.
\eaa
\ee

\subsection{Resultant vector field for $\calD_j$}
In order to combine the attractive and repulsive vector fields for $\calD_j$ smoothly near the obstacles $\calO_k$ and other defenders $\calD_l$, blending functions $\sigma_{ok}^{dj} (E_{ok}^{dj})$ and $\sigma_{dl}^{dj}(R_{dl}^{dj})$ are defined respectively using Eq.~\ref{eq:blendFun} with ($\xi_{ok}^{dj,m},\bar \xi_{ok}^{dj}, \xi_{ok}^{dj,u}$) and ($R_{d}^{d,m}, \bar R_{d}^{d}, R_{d}^{d,u}$) as the corresponding $\delta_\sigma$ triplets.  

The combined vector field for $\calD_j$ is then:
\be \label{eq:defendCombVectField1}
\baa
\textbf{F}^{dj}=&\displaystyle \prod_{k=1}^{N_o} ( 1- \sigma_{ok}^{dj}(E_{ok}^{dj})) \displaystyle \prod_{l\neq j \in I_d} ( 1- \sigma_{dl}^{dj}(E_{ok}^{dj})) \textbf{F}_{gj}^{dj} \\
&+ \displaystyle \sum_{k=1}^{N_o} \sigma_{ok}^{dj} (E_{ok}^{dj}) \textbf{F}_{ok}^{dj} + \displaystyle \sum_{l\neq j \in I_d} \sigma_{dl}^{dj} (R_{dl}^{dj}) \textbf{F}_{dl}^{dj}.
\eaa
\ee
Let us define $\textbf{e}_{dj}=\textbf{r}_{dj}-\textbf{r}_{gj}$, and $\mathbf v_{dj}^0$ as:

\be \label{eq:defendCont0}
\textbf{v}_{dj}^0 =
\begin{cases}
	 k_{d0} \tanh\left(\norm{\textbf{e}_{dj}}\right)\frac{\textbf{F}^{dj}}{\norm{\textbf{F}^{dj}}}, & \norm{\textbf{e}_{dj}} > e_{dj}^t ;\\ 
	k_{d1}\norm{\textbf{e}_{dj}}^{k_{d2}}\frac{\textbf{F}^{dj}}{\norm{\textbf{F}^{dj}}}, & \norm{\textbf{e}_{dj}} \le e_{dj}^t;
\end{cases}
\ee
where $0<k_{d2}<1$ and $k_{d0}=v_{max_{dj}}-v_{max_a}-R_{ad}\dot\psi_{max}'$ to ensure bounded $\mathbf v_{dj}^0$. To ensure continuity of $\mathbf v_{dj}^0$ at $\norm{\textbf{e}_{dj}}=e_{dj}^t$, the parameters $k_{d1}$ and $e_{dj}^t$, are chosen as: $
k_{d1}=k_{d0}\frac{\tanh(e_{dj}^t)}{(e_{dj}^t)^{k_{d2}}}, \;
(1-\tanh^2(e_{dj}^t))= k_{d2}\frac{\tanh(e_{dj}^t)}{e_{dj}^t}.
$

The control action for $\calD_j$ is then designed as:
\vspace{-2mm}
\be \label{eq:defendCont1}
\textbf{v}_{dj} =
\begin{cases}
\dot{\textbf{r}}_{gj} + \mathbf v_{dj}^0, &  \;   \sigma_{ok}^{dj} = \sigma_{dl}^{dj} =0, \forall k, l\neq j ; \\ 
\mathbf v_{dj}^0, &  \text{otherwise}.
\end{cases}
\ee

Under $\textbf{v}_{dj}$, $\calD_j$ tracks the reference trajectory ${\textbf{r}}_{gj}$ whenever it is not in conflict with any $\calO_k$ or $\calD_l, l\neq j$, otherwise it only tries to get closer to $\textbf{r}_{gj}$ by moving along the vector field $\textbf{F}^{dj}$ until it resolves the conflict.   

%

\section{Convergence and Safety Analysis}\label{sec:convergence}

\begin{theorem}\label{th:defender_convergence}
	The system trajectories of Eq.~\eqref{eq:defendDyn1} converge to piecewise continuous trajectories ${\textbf{r}}_{gj}$ ($\dot{{\textbf{r}}}_{gj}$), almost globally and in finite time, with $\mathbf{v}_{dj}$ as in Eq.~\eqref{eq:defendCont1}.
\end{theorem}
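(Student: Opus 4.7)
The plan is to split the analysis along the two branches of the control law \eqref{eq:defendCont1} and then leverage the structural similarity between the defender's combined vector field $\textbf{F}^{dj}$ in \eqref{eq:defendCombVectField1} and the formation vector field $\textbf{F}^{f}$ in \eqref{eq:combVectField1}, so that Theorem~\ref{th:safe-convergent_motion_plan} can be reused almost verbatim. I first treat the ``no conflict'' branch as a standalone finite-time stabilization problem with moving reference, and then argue that the ``conflict'' branch drives $\calD_j$, in finite time and for all but a measure-zero set of initial conditions, into the no-conflict regime.

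First I would consider $\sigma_{ok}^{dj}=\sigma_{dl}^{dj}=0$ for every $k\in I_o$ and every $l\neq j$. Then $\textbf{F}^{dj}=\textbf{F}_{gj}^{dj}=-\textbf{e}_{dj}/\norm{\textbf{e}_{dj}}$ and $\textbf{v}_{dj}=\dot{\textbf{r}}_{gj}+\textbf{v}_{dj}^0$, so the error dynamics collapse to $\dot{\textbf{e}}_{dj}=-c(\norm{\textbf{e}_{dj}})\,\textbf{e}_{dj}/\norm{\textbf{e}_{dj}}$, where $c(s)=k_{d0}\tanh(s)$ for $s>e_{dj}^t$ and $c(s)=k_{d1}s^{k_{d2}}$ for $s\le e_{dj}^t$. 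Taking $V=\norm{\textbf{e}_{dj}}$ yields $\dot V=-c(V)$. For $V>e_{dj}^t$, monotonicity of $\tanh$ gives $\dot V\le -k_{d0}\tanh(e_{dj}^t)<0$, so $V$ decreases to $e_{dj}^t$ in at most $(V(0)-e_{dj}^t)/[k_{d0}\tanh(e_{dj}^t)]$ seconds. For $V\le e_{dj}^t$, separation of variables on $\dot V=-k_{d1}V^{k_{d2}}$ with $k_{d2}\in(0,1)$ yields $V(t)^{1-k_{d2}}=V(0)^{1-k_{d2}}-k_{d1}(1-k_{d2})t$, so $V$ reaches zero in an additional time bounded by $(e_{dj}^t)^{1-k_{d2}}/[k_{d1}(1-k_{d2})]$. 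This is the Bhat--Bernstein--type finite-time contraction inherited from \cite{garg2018new}.

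Next I would handle the ``conflict'' branch, where at least one $\sigma_{ok}^{dj}$ or $\sigma_{dl}^{dj}$ is nonzero and $\textbf{v}_{dj}=\textbf{v}_{dj}^0$. Field $\textbf{F}^{dj}$ has precisely the product/sum structure of $\textbf{F}^f$, with the attractor translated from $\textbf{r}_s$ to $\textbf{r}_{gj}$ and with the repulsive sum augmented by the radially diverging terms $\textbf{F}_{dl}^{dj}$. Assumptions~\ref{assum:R_m_aok}--\ref{assum:R_safe}, together with the circular-arc spacing from Section~\ref{sec:herding_strategy}, ensure that at each point of the workspace at most one repulsive term is active and that $\textbf{r}_{gj}$ lies outside all $\calO_k^u$. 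Under these hypotheses, the singular-point enumeration in the proof of Theorem~\ref{th:safe-convergent_motion_plan} carries over: $\textbf{F}^{dj}$ has no zero other than $\textbf{r}_{dj}=\textbf{r}_{gj}$ (with at most a measure-zero saddle set on which $\partial\bar\beta=\pm\pi$ is ruled out by the same argument), and integral curves entering an influence set exit its outer contour without crossing $\calO_k^m$. Thus, modulo a measure-zero set of initial conditions, the flow of $\textbf{F}^{dj}$ is globally attracted to $\textbf{r}_{gj}$, and in particular reaches the no-conflict region in finite time, at which point the first step takes over.

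The main obstacle, in my view, is precisely the conflict branch: dropping the feedforward $\dot{\textbf{r}}_{gj}$ means the error equation is perturbed by $-\dot{\textbf{r}}_{gj}$, which is nonzero because $\textbf{r}_{gj}$ co-moves with $\calA$ and rotates with $\dot\psi'$. I would resolve this by exploiting the design choice $k_{d0}=v_{max_{dj}}-v_{max_a}-R_{ad}\dot\psi'_{max}$: since $\norm{\dot{\textbf{r}}_{gj}}\le v_{max_a}+R_{ad}\dot\psi'_{max}$ and $\norm{\textbf{v}_{dj}^0}$ is bounded below by a positive constant outside any neighborhood of $\textbf{r}_{gj}$, the component of the defender's velocity along $\textbf{F}^{dj}$ strictly dominates the reference's drift, so escape from the conflict region occurs in finite time. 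Combining this with the finite-time estimate of the first step yields a uniform finite convergence time $T_d^c$ for almost every initial condition, as claimed.
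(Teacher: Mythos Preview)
Your decomposition is exactly the paper's: treat the no-conflict branch as a finite-time stabilization problem for $\textbf{e}_{dj}$, and argue separately that the conflict branch resolves in finite time for almost all initial conditions. For the no-conflict branch you give explicit time bounds by separating variables on $\dot V=-c(V)$; the paper instead quotes Lemma~1 of \cite{garg2018new} and Lemma~4 of \cite{garg2019finite} for the same two-regime conclusion, so this step is equivalent.

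The one place your argument diverges is the source you invoke for conflict resolution. You recycle the singular-point enumeration of Theorem~\ref{th:safe-convergent_motion_plan}, but that analysis is tailored to the super-elliptic repulsive fields $\textbf{F}_{ok}^{f}$, whose angle law \eqref{eq:fieldAngle} is precisely what prevents $\partial\bar\beta_{ok}=\pm\pi$ and hence eliminates \emph{all} singularities. The inter-defender repulsion $\textbf{F}_{dl}^{dj}$ in \eqref{eq:defendRepulsField2} is a purely radial diverging field, and its blend with the attractor \emph{does} vanish when $\calD_j$, $\calD_l$ and $\textbf{r}_{gj}$ are collinear with $\sigma_{dl}^{dj}=\tfrac12$; this radial mechanism, not the $\partial\bar\beta_{ok}=\pm\pi$ scenario, is what produces the ``almost global'' qualifier. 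The paper accordingly does not reuse Theorem~\ref{th:safe-convergent_motion_plan} here but cites Theorem~5 of \cite{panagou2017distributed}, which treats conflict resolution for circular repulsive fields around moving agents. Your assertion that ``at most one repulsive term is active'' is likewise unjustified during transients, since the arc spacing bounds separation only at the desired formation $\textbf{r}_{gj}$, not en route. On the other hand, your third paragraph makes explicit the domination argument behind the choice $k_{d0}=v_{max_{dj}}-v_{max_a}-R_{ad}\dot\psi'_{max}$, which the paper's proof leaves implicit; that is a genuine addition.
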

\begin{proof}With Assumptions \ref{assum:R_m_aok} and \ref{assum:R_safe}, we have that $\mathbf{r}_{gj}$ never intersect the obstacles; thus one can choose some $\xi_{ok}^{dj,m}>0$ to generate $\textbf{F}_{ok}^{dj}$ for safe motion of $\calD_j$. Each defender $\calD_j$ moves under the vector field $\mathbf F^{dj}$ given by Eq.~\eqref{eq:defendCombVectField1}. By following the analysis in Theorem 5 of \cite{panagou2017distributed}, $\calD_j$'s resolve the conflicts with other $\calD_l$'s for all initial conditions except a set of initial conditions of measure zero. With finite number of $\calO$'s and $\calD$'s, $\calD_j$ resolves conflicts in finite time $T_{dj}^{0}$.
	
	Now, when $\calD_j$ is not in conflict with $\calD_l$/$\calO_k$, we have:
		\be \label{eq:def_err_dyn}
		\dot{\textbf{e}}_{dj}=
		\begin{cases}
			-k_{d0} \tanh\left(\norm{\textbf{e}_{dj}}\right)\frac{\textbf{e}^{dj}}{\norm{\textbf{e}^{dj}}}, & \quad \norm{\textbf{e}_{dj}} > e_{dj}^t; \\ 
			-k_{d1}\norm{\textbf{e}_{dj}}^{k_{d2}-1}\textbf{e}^{dj}, & \quad \norm{\textbf{e}_{dj}} \le e_{dj}^t. 
		\end{cases}
	\ee
	    Case 1: When $\norm{\textbf{e}_{dj}(T_{dj}^{0})}=e_{dj0} \le e_{dj}^t$, the dynamics read: $\dot{\textbf{e}}_{dj}=	-k_{d1}\norm{\textbf{e}_{dj}}^{k_{d2}-1}\textbf{e}^{dj}$. Then, from Lemma 1 in \cite{garg2018new}, the origin $\textbf{e}_{dj}=\textbf{0}$ is a finite-time stable equilibrium of the system \eqref{eq:def_err_dyn}. Denote the time required for convergence when $e_{dj0}=e_{dj}^t$ as $T_{dj}^f < \infty$.
	    
	    \noindent Case 2: When $e_{dj0} > e_{dj}^t$, the dynamics of the system read: $
	    \dot{\textbf{e}}_{dj}=-k_{d0} \tanh\left(\norm{\textbf{e}_{dj}}\right)\frac{\textbf{e}^{dj}}{\norm{\textbf{e}^{dj}}}
	    $.
	    From Lemma 4 in \cite{garg2019finite} there exist a finite time $T_{dj}^t=\frac{-e_{dj0}}{\tanh(e_{dj0})}\log \left(\frac{(e_{dj}^t)^2}{2 V(e_{dj0})}\right)$ such that $\norm{\textbf{e}_{dj}(t)} \le e_{dj}^t$ for all $t\ge T_{dj}^t$. This combined with case 1 implies that for $e_{dj0} > e_{dj}^t$, $\exists T_{dj}^{c}=T_{dj}^t+T_{dj}^f +T_{dj}^0< \infty$ s.t $\norm{\textbf{e}_{dj}(t)}=0$ for all $t\ge T_{dj}^c$. Hence $\textbf{r}_{dj}$ converges to $\textbf{r}_{gj}$ in finite time. 
\end{proof}
To check whether $\calD_j$ converges to $r_{gj}$ before $\calA$ reaches $\calP$, we find a conservative lower bound on the time required by $\calA$ to reach $\calP$ in the absence of obstacles and defenders as: $T_{a}^c=\frac{\norm{\textbf{r}_a(0)-\textbf{r}_p}}{v_{max_a}}$. If  $ \forall j \in I_d, \; T_{dj}^{c}<T_a^{c}$, then $\calD_j$ can herd $\calA$ to $\calS$. Finding $T_{dj}^0$ and a tight bound on $T_{a}^c$ is an involved task and will be addressed in the future work.

\begin{theorem}
If $\forall j \in I_d$ the desired positions $\textbf{r}_{gj}$ of $\calD_j$ are chosen as given in Eq.~\eqref{eq:desiredDefPos}, with $\dot{\textbf{r}}_{gj}$ as in Eq.~\eqref{eq:desiredDefVel}, where $\psi$ is given by Eq.~\eqref{eq:formationHeadAngle2}, $\calD_j$ uses the control action $\textbf{v}_{dj}$ (Eq.~\eqref{eq:defendCont1}), and $\norm{\textbf{r}_a(t)-\textbf{r}_p} >\rho_p$ at $t=T_d^c=\max_{j\in I_d} T_{dj}^c$, then $\exists T_s>0$ s.t. $\norm{\textbf{r}_a(t)-\textbf{r}_s} <\rho_s, \forall t\ge T_s$, i.e., $\calA$ is taken into $\calS$ and is trapped there forever. 
\end{theorem}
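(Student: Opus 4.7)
The plan is to split the analysis into two phases: a \emph{herding} phase for $t\in[T_d^c,T_s]$ during which $\calA$ is steered into $\calS$, and a \emph{trapping} phase for $t\ge T_s$ during which $\calA$ never leaves $\calS$. In both phases I exploit the fact that, by Theorem~\ref{th:defender_convergence}, the defenders have converged to their desired formation by $t=T_d^c$, so the choice of $\psi'$ in \eqref{eq:desired_psi_dash} aligns the resultant attacker field \eqref{eq:attackVectField4} with the unit vector at angle $\psi$ given by \eqref{eq:formationHeadAngle2}. Consequently, by Assumption~\ref{assum:attack_nzero_speed} and \eqref{eq:attack_control}, $\dot{\textbf{r}}_a=\|\textbf{v}_a\|[\cos\psi,\sin\psi]^T$ for all $t\ge T_d^c$.

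For the herding phase, the first branch of \eqref{eq:formationHeadAngle2} sets $\psi=\arg(\textbf{F}^f(\textbf{r}_a))$, so $\dot{\textbf{r}}_a$ is a positive scalar multiple of $\textbf{F}^f(\textbf{r}_a)$ and $\textbf{r}_a(t)$ is a time-reparametrized integral curve of $\textbf{F}^f$. By Theorem~\ref{th:safe-convergent_motion_plan}, $\textbf{F}^f$ is a safe and almost globally attractive motion plan to $\textbf{r}_s$ in $\calW_s$, and the non-vanishing speed implied by Assumption~\ref{assum:attack_nzero_speed} rules out asymptotic stalling. Since the hypothesis $\|\textbf{r}_a(T_d^c)-\textbf{r}_p\|>\rho_p$ guarantees $\calA$ has not reached $\calP$, the attacker enters $\calS$ at some finite $T_s\ge T_d^c$.

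For the trapping phase I use the Lyapunov candidate $V(\textbf{r}_a)=\tfrac12\|\textbf{r}_a-\textbf{r}_s\|^2$. A short calculation shows that in the third branch of \eqref{eq:formationHeadAngle2} the velocity $\dot{\textbf{r}}_a$ makes an angle of $-\tfrac\pi2-\epsilon_s$ with the outward radial vector $\textbf{r}_a-\textbf{r}_s$, yielding $\dot V=-\|\textbf{v}_a\|\sin(\epsilon_s)\|\textbf{r}_a-\textbf{r}_s\|\le 0$. In the second branch, the relative angle rotates monotonically from $-\pi$ at $t=T_s$ to $-\tfrac\pi2-\epsilon_s$ at $t=T_s+\Delta T_t$, a range on which the cosine is strictly negative, so $V$ is already nonincreasing during the transition window. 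Combined with the design bounds on $\Delta T_t$ and $\rho_s$, which respectively enable the defenders to slew the formation by nearly $\pi/2$ within the transition and absorb the worst-case radial overshoot, this yields $\|\textbf{r}_a(t)-\textbf{r}_s\|<\rho_s$ for all $t\ge T_s$, establishing forward invariance of $\calS$.

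The main obstacle is making the trapping argument rigorous in the presence of residual defender tracking error. Theorem~\ref{th:defender_convergence} delivers finite-time convergence only when the defenders are free of active conflicts, but during the transition the goal positions $\textbf{r}_{gj}$ rotate around a moving $\textbf{r}_a$ at rate $\dot\psi'$, so nonzero $\textbf{e}_{dj}$ can re-emerge and perturb $\textbf{F}^a$ away from the nominal direction $\psi$. My plan is to bound the magnitude and duration of these residual errors using the speed margin $v_{max_d}-v_{max_a}$ in \eqref{eq:defendCont1}; this translates, via the formation geometry, into the $v_{max_a}R_{ad}/(v_{max_d}-v_{max_a})$ term in the $\rho_s$ condition, which is precisely the buffer needed to keep $\calA$ strictly inside $\calS$ despite the transient. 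Once this perturbation bound is secured, the Lyapunov inequality above closes the invariance claim and completes the proof.
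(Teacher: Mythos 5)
Your proposal follows essentially the same two-phase argument as the paper's proof: the herding phase is handled by identifying $\textbf{r}_a(t)$ (for $t\ge T_d^c$) with a time-reparametrized integral curve of $\textbf{F}^{f}$ and invoking Theorem~\ref{th:safe-convergent_motion_plan} together with Assumption~\ref{assum:attack_nzero_speed} to get a finite entry time $T_s$, and the trapping phase rests on the observation that the designed heading keeps $\dot{\textbf{r}}_a$ pointing into the interior of the circle of current radius about $\textbf{r}_s$, with the stated bounds on $\Delta T_t$ and $\rho_s$ absorbing the transition. Your explicit Lyapunov computation and your (unfinished) treatment of residual defender tracking error during the slewing of $\psi$ are more detailed than the paper's proof, which simply asserts the inward-pointing property and implicitly assumes exact tracking of $\textbf{r}_{gj}$ for all $t\ge T_d^c$.
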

\begin{proof}
	From Theorem 3, $\calD_j$ converge to $\textbf{r}_{gj}$ in finite time $T_d^c$ for all $j \in I_d$. If $\calA$ has not reached inside $\calP$, using Eqs.~\eqref{eq:attack_control}, \eqref{eq:attack_vector_field5} and \eqref{eq:formationHeadAngle2}, the dynamics of $\calA$ reads:
	\vspace{-2mm}
	\be
	  \dot{\textbf{r}}_a= \frac{\textbf{F}^a}{\norm{\textbf{F}^a}}\norm{\textbf{v}_a}=\frac{\textbf{F}}{\norm{\textbf{F}}}\norm{\textbf{v}_a}.
	\ee
	Since from Theorem \ref{th:safe-convergent_motion_plan}, we have that the the vector field $\textbf{F}$ is non-vanishing at all points except $\textbf{r}=\textbf{r}_s$, it is safe w.r.t the obstacles and is globally convergent to $\textbf{r}_s$. In other words, there exist an integral curve from $\textbf{r}_a(T_d^c)$ which reaches $\textbf{r}=\textbf{r}_s$. Also from the Assumption \ref{assum:attack_nzero_speed}, we have that the speed $\norm{\textbf{v}_a}\neq 0 $ for all $t \ge T_d^c$. This implies that for any $\textbf{r}_a(T_d^c) \in \calW_s^a \subseteq \calW_s, \;  \exists T_s\ge T_d^c$ s.t. $\norm{r_a-r_s}=\rho_s$, i.e, $\textbf{r}_a$ enters $\calS$. Once inside $\calS$, the reference direction always points in the interior of the circle $\norm{\textbf{r}-\textbf{r}_s}=\norm{\textbf{r}_a(t)-\textbf{r}_s}$ and the choice of  $\rho_s$ for a given $v_{max_a}$, as discussed in Section \ref{subsec:safe_area_psi}, ensures that $\calA $ never escapes $\calS$. This completes the proof.
\end{proof}

\section{Simulations and Results}\label{sec:simulations}
The proposed herding strategy is evaluated via simulation results. We consider $N_o=6$ rectangular obstacles in $\calW=\bR^2$ with properties as: $(x_{ok},y_{ok},w_{ok},h_{ok})= \left\{(10,23,2,3),\;(-6,18,3,4),\;(11,5,2,2),\;(15,43,3,3),\;\right.$ $\left. (-2,45,3,4),\;(12,60,4,3)\right\}$. $\calP$ is centered at $[0,0]^T$ with $\rho_p =2 $ (pink dotted circle in Fig.~\ref{fig:herdWithObst1}) and $\calS$ is centered at $\textbf{r}_s=[-5,\;60]^T$ with $\rho_s = 5$ (green dotted circle in Fig.~\ref{fig:herdWithObst1}). $\calA, \; \calD_1,\; \calD_2$ and $\calD_3$ are initially located at $[20,48]^T$, $[-10,16]^T$, $[6,2]^T$, $[-5,-1]^T$, respectively. The other parameters are chosen as: $\rho_a =\rho_d=0.1$, $\rho_a^c=10$, $R_{safe}=2\rho_d$, $R_{ad}=0.55$, $R_{d}^{a,m}=0.3$, $R_{d}^{a,u}=0.65$.
%
%
%

Fig. \ref{fig:herdWithObst1} shows the trajectories of $\calA$ (in red) and all $\calD_j$ (in blue). It can be observed that all $\calD_j$ safely guide $\calA$ into the safe area $\calS$ and keep it there forever. $\calA$ is assumed to move at its maximum speed $v_{max_a}$ in the simulation.
\begin{figure}[!htbp]
	\centering
	\includegraphics[width=.75\linewidth]{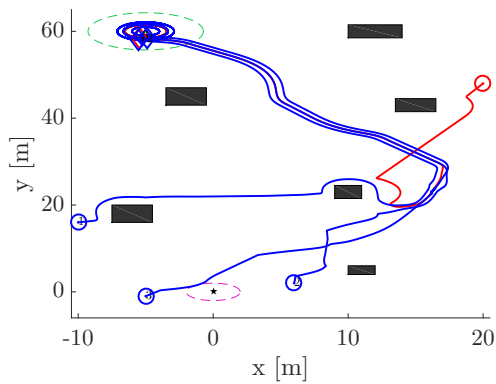}
	\caption{Trajectories of the agents during the herding}
	\label{fig:herdWithObst1}
\end{figure}
To assess the safety of the agents from $\calO_k$'s, we define critical relative distances:
\be
E_{rel}^{ao} = \displaystyle \max_{k \in I_o}\frac{\xi_{ok}^{m}}{ E_{ok}^a} , \quad 
E_{rel}^{do} = \displaystyle \max_{k \in I_o} \displaystyle \max_{j \in I_d} \frac{\xi_{ok}^{dj,m}}{ E_{ok}^{dj}},
\ee
which have to be less than 1 for safety.
Similarly, we define critical relative distances as: 
\be
R_{rel}^{dd} = \displaystyle \max_{j\neq l \in I_d} \frac{R_{d}^{d,m}}{R_{dl}^{dj}}, \quad R_{rel}^{ad} = \displaystyle \max_{j \in I_d} \frac{R_{d}^{a,m}}{R_{dj}^{a}},
\ee
which have to be less than 1 for no inter-agent collisions. As observed in Fig.~\ref{fig:crit_rel_dist_and_speeds}, all the critical relative distances are less than 1 and hence there are no collisions.
\begin{figure}[h]
	\centering
	\includegraphics[width=1\linewidth]{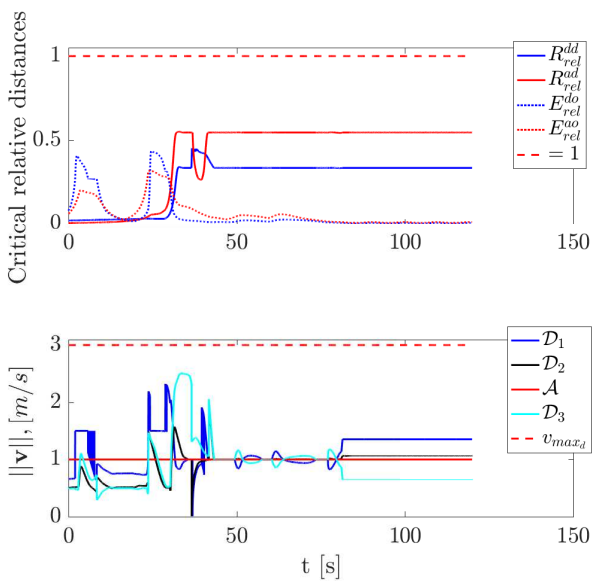}
	\caption{Critical relative distances and control speeds}
	\label{fig:crit_rel_dist_and_speeds}
\end{figure}
Figure \ref{fig:crit_rel_dist_and_speeds} also shows that the defenders' speeds are bounded and less than their maximum speeds.
%
\vspace{-2mm}
\section{Conclusions and Future Work} \label{sec:conclusions}

In this paper, we proposed novel vector-field based guidance laws for defenders herding an attacker to a safe area. Finite-time controllers are designed for the defenders to converge to a formation around the attacker, while safety and convergence are formally proved. Simulation results demonstrate efficacy of the herding strategy. As future work, we plan to investigate the case of multiple attackers under double integrator dynamics.






\section*{ACKNOWLEDGMENT}
We acknowledge the reviewers' valuable comments.
\vspace{-1mm}
\bibliographystyle{IEEEtran}
\bibliography{ACC_2019_Refs}
\end{document}